	\newtheorem{prob}{Problem}
	\newtheorem{definition}{Definition}
	\newtheorem{theorem}{Theorem}
\def\BibTeX{{\rm B\kern-.05em{\sc i\kern-.025em b}\kern-.08em
    T\kern-.1667em\lower.7ex\hbox{E}\kern-.125emX}}
\begin{document}

%
\title{Minimizing Age of Information via Hybrid NOMA/OMA
}

\author{\IEEEauthorblockN{Qian Wang\textsuperscript{1,2}, He Chen\textsuperscript{2}, Yonghui Li\textsuperscript{1}, Branka Vucetic\textsuperscript{1}}%
	\thanks{This work is done when Qian Wang is a visiting student at the Chinese University of Hong Kong.}
	\IEEEauthorblockA{\textsuperscript{1}School of Electrical and Information Engineering, The University of Sydney, Sydney, Australia} 
	\IEEEauthorblockA{\textsuperscript{2} Department of Information Engineering, Chinese University of Hong Kong, Hong Kong SAR, China\\
		\textsuperscript{1}\{qian.wang2,  yonghui.li, branka.vucetic\}@sydney.edu.au, \textsuperscript{2}he.chen@ie.cuhk.edu.hk}}

\maketitle

\begin{abstract}
This paper considers a wireless network with a base station (BS) conducting timely transmission to two clients in a slotted manner via hybrid non-orthogonal multiple access (NOMA)/orthogonal multiple access (OMA). Specifically, the BS is able to adaptively switch between NOMA and OMA for the downlink transmission to minimize the information freshness, characterized by Age of Information (AoI), of the network. If the BS chooses OMA, it can only serve one client within a time slot and should decide which client to serve; if the BS chooses NOMA, it can serve both clients simultaneously and should decide the power allocated to each client. To minimize the weighted sum of expected AoI of the network, we formulate a Markov Decision Process (MDP) problem and develop an optimal policy for the BS to decide whether to use NOMA or OMA for each downlink transmission based on the instantaneous AoI of both clients. We prove the existence of optimal stationary and deterministic policy, and perform action elimination to reduce the action space for lower computation complexity. The optimal policy is shown to have a switching-type property with obvious decision switching boundaries. A suboptimal policy with lower computation complexity is also devised, which can achieve near-optimal performance according to our simulation results. The performance of different policies under different system settings is compared and analyzed in numerical results to provide useful insights for practical system designs.
\end{abstract}
%
%
\section{Introduction}
Recently, researchers have shown an enormous interest \cite{kaul2012real,wang2019minimizing2,ceran2019average,wang2018skip,wang2019minimizing,kaul2012status,gu2019timely,gu2019minimizing,wang2019age,kadota2018optimizing,kadota2018scheduling,kadota2019minimizing,yates2017status,jiang2018can,maatouk2019minimizing1} in a new performance metric, termed \textit{Age of Information} (AoI), due to its advantages in characterizing the timeliness of data transmission in status update systems. The timeliness of status update is of great importance, especially in real-time monitoring applications, in which the dynamics of the monitored processes need to be well grasped for further actions. The AoI is defined as the time elapsed since the generation time of the latest received status update at the destination \cite{kaul2012real}. According to the definition, the AoI is jointly determined by the transmission interval and the transmission delay. Early work on the analysis and optimization of AoI in various networks has mainly focused on the simple single-source system model \cite{ceran2019average,wang2019minimizing2,wang2018skip,wang2019minimizing,kaul2012real,kaul2012status,gu2019timely,gu2019minimizing,wang2019age}. 

Recent researches on AoI optimization pays more attention in the more practical multi-source systems \cite{yates2017status,kadota2018optimizing,kadota2018scheduling,kadota2019minimizing,jiang2018can,maatouk2019minimizing1}. For systems with multiple sources, the AoI of the whole system heavily depends on the transmission scheduling of all devices. In this line of research, the authors in \cite{kadota2018optimizing} considered a base station (BS) receiving status updates from multiple nodes with a \textit{generate-at-will} status arrival model. A BS serving status updates to multiple nodes with randomly generated status update was considered in \cite{kadota2019minimizing}. Both of them derived the lower bound of the weighted sum of the expected AoI of the considered network and compared the lower bound with that of various suboptimal policies including Whittle index policy, max-weight policy, etc. The authors in \cite{jiang2018can} also considered system with stochastic arrivals and derived the Whittle index policy in closed form. A decentralized policy was further proposed in \cite{jiang2018can}, which was shown to achieve near-optimal performance. Another branch of this research line is to optimize the AoI of the networks with random access protocols. Particularly, the AoI performance of slotted ALOHA and Carrier Sense Multiple Access (CSMA) were investigated in \cite{yates2017status,maatouk2019minimizing1}, respectively. 

All aforementioned studies on AoI have concentrated on the orthogonal multiple access (OMA) scheme. That is, only a status update packet can be delivered and received in one time slot. Very recently, the authors in \cite{maatouk2019minimizing} have for the first time investigated the potential of applying non-orthogonal multiple access (NOMA) in minimizing the average AoI of a two-node network. The results in \cite{maatouk2019minimizing} showed that OMA and NOMA can outperform each other in different setups. NOMA has been regarded as a promising technique to deal with large-scale Internet of thing (IoT) deployment \cite{ding2017application,saito2013non}. The idea of NOMA is to leverage the power domain to enable multiple clients to be served at the same time or frequency band. Compared to OMA, NOMA can reduce AoI by improving spectrum utilization efficiency. Specifically, more than one client can be served by the BS using NOMA, resulting in a possible AoI drop of more than one client. While in OMA, only the served client may have AoI drop and the AoI of other clients all increases. In this context, a natural question arises: how should a multi-user system adaptively switch between OMA and NOMA modes to minimize the AoI of the network? To the best of authors' knowledge, the answer to this question remains unknown in the literature. Note that the NOMA scheme makes it possible for the BS to serve more clients at the cost of a high error probability, which makes the optimal policy design problem for hybrid NOMA/OMA non-trivial.


Motivated by the gap above, in this paper, we consider a wireless network with a BS that conducts timely transmission to two clients in a slotted manner. The BS is able to adaptively switch between NOMA and OMA for the downlink transmission. To achieve the optimal AoI performance, the BS needs to decide which scheme (i.e., NOMA or OMA) to use at the beginning of each time slot. For the OMA scheme, the BS should further decide which client to serve. For the NOMA scheme, the BS needs to further decide the power allocated to each client. We make the following contributions in this paper: To minimize the AoI of the network, we develop an optimal policy for the BS to decide whether to use NOMA or OMA for each downlink transmission based on the instantaneous AoI of both clients by formulating a Markov Decision Process (MDP) problem. We prove the existence of the optimal stationary and deterministic policy, and perform action elimination to reduce the action space for lower computation complexity. The optimal policy is shown to have a switching-type property with obvious decision switching boundaries. A suboptimal policy with lower computation complexity is also proposed, which can achieve near-optimal performance according to our simulation results. The approximate average AoI performance of the suboptimal policy is also derived by applying a two-dimensional Markov chain. The performance of different policies under different system settings is compared and analyzed in numerical results to provide useful insights for practical system designs.

\section{System Model and Problem Formulation}

We consider a wireless network with a BS that conducts timely transmission to two clients in a slotted manner. At the beginning of each time slot, the BS can generate a packet for each client, which is known as \textit{generate-at-will} in the literature. This model is practical as the BS generally has the ability to control when to download information from cloud or server. Adaptive NOMA/OMA transmission scheme is adopted by the BS. That is, the BS adaptively switches between NOMA and OMA for the downlink transmission. Thus, it is possible for two clients to receive their packets simultaneously within one time slot. At the end of each time slot, if client $i$ has received its packet successfully from the BS, it will send an acknowledgment (ACK) to the BS. The ACK link from both clients to the base station is considered to be error-free and delay-free. 

We adopt a recent metric, \textit{Age of Information} (AoI) \cite{kaul2012real}, to characterize the timeliness of the information received at each client. It is defined as the time elapsed since the generation time of the latest received information at the destination side. Mathematically, the AoI of client $i$, denoted by $\Delta_i(t)$, at time $t$ is $t-u_i(t)$, where $u_i(t)$ denotes the generation time of latest received status update at time $t$. According to the \textit{generate-at-will} status generation model at the BS, if client $i$ has received its information from the BS, its AoI will decrease to $1$, otherwise its AoI increases by $1$. Mathematically, we have
\begin{equation}
\Delta_i(t+1)=
\left\{
\begin{array}{rcl}
\Delta_i(t)+1 ,& &v_i(t)=0 ,\\
1,& &v_i(t)=1, \\
\end{array}
\right.
\end{equation} where $v_i(t)$ denotes the indicator that is equal to $1$ when the client $i$ receives its information from the BS in time slot $t$ and $v_i(t)=0$ otherwise. The weighted sum of the expected AoI of the two clients is adopted to measure the network-wide information timeliness, which is given by
\begin{equation}
\bar{\boldmath \Delta}=\lim_{T \rightarrow \infty}\frac{1}{T}\mathbb{E}[\sum_{i=1}^{2}\sum_{t=1}^{T}w_i\Delta_i(t)],
\end{equation} where $w_i$ is the weight coefficient of client $i$ with $\sum_{i=1}^2 w_i=1$, and the expectation is taken over all possible system dynamics.

We consider that in the OMA mode, the BS allocates time slots to conduct transmission to each client individually. In this context, if the time slot is assigned for transmission to client $i$, $i\in\{1,2\}$, the observation at the client $i$ can be written as 
\begin{equation}
y_i=h_i \sqrt{P}m_i+n_i,
\end{equation} where $P$ is the constant transmission power of the BS; $m_i$ is the status update information from the BS to client $i$; $h_i$ is the channel coefficient between the BS and client $i$. Specifically, $h_i=\sqrt{d_i^{-\tau}}g_i$, where the normalized distance $d_i=c_i/c_0$, with $c_i$ and $c_0$ denoting the distance between client $i$ to the BS and the baseline distance, respectively. $\tau$ denotes the path loss exponent and $g_i \sim \mathcal{CN}(0,1)$ ($\mathcal{CN}$ denotes complex normal distribution). Without loss of generality, we consider $c_1<c_2$, i.e., ${|h_1|}^2>{|h_2|}^2$. $n_i$ is the complex additive Gaussian noise with variance $\sigma_i^2$. For simplify, we assume the variance of $n_i$ is identical for both clients, i.e., $\sigma_i^2=\sigma^2$, $\forall i$. After receiving the signal, the information can be decoded in an interference-free manner with a SNR $\gamma_i={|h_i|}^2\rho$, where $\rho=P/\sigma^2$ is the transmission SNR. Then, the rate for client $i$ can be expressed as $R_i^{OMA}=\log(1+\gamma_i)$. The outage probability at client $i$ using OMA is given by
\begin{small}
\begin{equation}
\begin{split}
P_i^{O}&=1-P\left(R_i^{OMA}\geq R_i\right)\\
&=1-\exp\left(-\frac{(2^{R_i}-1)d_i^{\tau}}{\rho}\right),
\end{split}
\end{equation} 
\end{small} where $R_i$ is the target rate of client $i$. For simplicity, we assume that the target rates of both clients are the same, i.e., $R_1 = R_2 = R$. Note that the framework developed in this paper can be readily extended to the case with distinct target rates. 

On the other hand, in NOMA, the signals to different clients is combined in the power domain by allocating different power levels to them at the BS. Thus, through successive interference cancellation (SIC), it is possible for two clients to successfully recover their corresponding information in the same time slot. As we consider fixed power transmission, the observation at client $i$ can be given by 
\begin{equation}
y_i=h_i (\sqrt{\alpha_1 P}m_1+\sqrt{\alpha_2 P}m_2)+n_i,
\end{equation} where $\alpha_i$ is the power allocation coefficient, and we readily have $\alpha_1+\alpha_2=1$ to achieve the best possible performance. It is assumed that the BS only has the knowledge of stochastic channel state information (CSI) of its channels to both clients, while the clients as receivers have perfect knowledge of CSI as in \cite{cui2016novel,yu2017performance}. In this way, we have $\alpha_1<\alpha_2$ according to the NOMA principle. 

Then, for client $2$ (i.e., the far user), it decodes its message from the BS directly by treating $m_1$ as interference and its received SINR is can be written as
\begin{equation}
\gamma_{22}=\alpha_2 {|h_2|}^2/(\alpha_1 {|h_2|}^2+1/\rho).
\end{equation} Therefore, the outage probability of client $2$ using NOMA is given by
\begin{equation}
\label{outage2}
\begin{split}
P_2^{N}&=1-P(\log(1+\gamma_{22})\geq R)\\
&=1-\exp\left(-\frac{(2^{R}-1)d_2^\tau}{\rho (\alpha_2-\alpha_1(2^{R}-1))}\right),
\end{split}
\end{equation} where we enforce  $\alpha_2-\alpha_1(2^{R}-1)>0$, i.e., $\alpha_2>\frac{2^R-1}{2^R}$.

For client $1$ (i.e., the near user), it will conduct SIC. Specifically, client $1$ will first decode $m_2$ as what client $2$ has done by treating $m_1$ as interference. The received SINR of client $1$ when decoding $m_2$, $\gamma_{12}$ can thus be similarly expressed as
\begin{equation}
\gamma_{12}=\alpha_2 {|h_1|}^2/(\alpha_1 {|h_1|}^2+1/\rho).
\end{equation} After $m_2$ is successfully decoded, client $1$ decodes $m_1$ without interference, and the SNR is
\begin{equation}
\gamma_{11}=\alpha_1 {|h_1|}^2\rho.
\end{equation} The outage probability of client $1$ using NOMA can thus be calculated as
\begin{small}
\begin{equation}
\label{outage1}
\begin{split}
&P_1^{N}=1-P(\log(1+\gamma_{22})\geq R\& \log(1+\gamma_{11})\geq R)\\
&=1-\exp\left(-\max\left\{\frac{(2^{R}-1)d_1^\tau}{\rho (\alpha_2-\alpha_1(2^{R}-1))},\frac{(2^{R}-1)d_1^{\tau}}{\rho\alpha_1}\right\}\right).
\end{split}
\end{equation}
\end{small}

Comparing the outage probability of each client using either NOMA or OMA scheme, we can find that NOMA offers more chance for the BS to transmit time-sensitive information to both clients at the cost of a higher outage probability. Thus, to maintain the timeliness of the information received at each client, at the beginning of each time slot, the BS needs to carefully decide whether to use NOMA or OMA scheme.  In addition, the outage probability of NOMA is determined by the power allocation for each client. As such, when using NOMA, the BS should appropriately allocate power for the transmission to each client. The power allocated to each client is considered to be discrete in this paper, which is practical in real systems. Specifically, the power allocated to client $i$, denoted by $p_i$, can only take the value from the discrete set $\{0,p,2p,3p,..Np\}$ with $p=P/N$ and $p_1+p_2=P$, as $\alpha_1=1-\alpha_2$. That is, $\alpha_i$ can take the value from $\{0,\frac{1}{N},\frac{2}{N},\frac{3}{N},..,1\}$. As client $2$ is far from the BS (i.e., $c_1<c_2$), to effectively use NOMA, $\alpha_2$ should be larger than $\alpha_1$ when applying NOMA, i.e., $\alpha_2>0.5$. Combining it with the previous condition $\alpha_2>\frac{2^R-1}{2^R}$, we can deduce that $\alpha_2$ can only take value from $\{0,\max\{\frac{1}{2}+\frac{1}{N},\lceil\frac{(2^R-1)N}{2^R}\rceil\frac{1}{N}\},\max\{\frac{1}{2}+\frac{1}{N},\lceil\frac{(2^R-1)N}{2^R}\rceil\frac{1}{N}\}+\frac{1}{N},...,1\}$.

Let $\alpha_2(t)$ denote the power allocation coefficient for client $2$ at time slot $t$. Specifically, $\alpha_2(t)=0$ and $\alpha_2(t)=1$ indicates the BS uses OMA scheme, conducting transmission to client $1$ and client $2$, respectively; otherwise, the BS uses NOMA scheme, serving both clients with the amount of power $\alpha_2(t)P $ allocated to client $2$ and $(1-\alpha_2(t))P$ to client $1$.


Let $\pi$ denote the transmission policy at the BS, consisting of a sequence of actions at each time slot, denoting by $\{a_t\}$. $a_t\in \{0,\max\{\lceil{\frac{N}{2}\rceil}+1,\lceil\frac{(2^R-1)N}{2^R}\rceil\},...,N\}$ indicate the BS allocate $a_tp$ amount of power to client $2$. If $a_t=0$, the BS chooses OMA scheme and transmits information to client $1$; if $a_t=N$, the BS chooses OMA scheme and transmits information to client $2$;  otherwise, the BS chooses NOMA scheme, with $a_tp$ amount of power allocated to client $2$ and $P-a_tp$ allocated to client $1$. Our design objective is to find the optimal policy for the BS that adaptively switches between NOMA and OMA schemes to minimize the weighted-sum of the expected AoI for both clients. The problem can be formally formulated as
\begin{prob}
	\label{p1}
		\begin{equation}
		\min_{\pi} \bar{\boldmath \Delta}.\\
		\end{equation}
\end{prob} 
\section{Age-Optimal policy}
\subsection{MDP Formulation} 
In this subsection, we solve Problem \ref{p1} by recasting it into a MDP problem, described by a tuple $\{\mathcal{S},\mathcal{A},\mathrm{P},r\}$, where

 \begin{itemize}
	\item State space $\mathcal{S}=\mathcal{N}^2$: The state at time slot $t$ is composed by the instantaneous AoI of both clients, $s_t\triangleq (\Delta_{1t},\Delta_{2t})$.
	\item Action space $\mathcal{A}=\{0,\max\{\lceil{\frac{N}{2}\rceil}+1,\lceil\frac{(2^R-1)N}{2^R}\rceil\},...,N\}$: the detailed description of action $a_t \in \mathcal{A}$ has been provided in the previous section. 
	\item Transition probabilities $\mathrm{P}$: $P(s_{t+1}|s_t,a_t)$ is the probability of transit from state $s_t$ to $s_{t+1}$ when taking action $a_t$. According to the outage probability of both clients using either NOMA or OMA given in Section II, we have
	\begin{equation}
	\label{e3}
	\begin{aligned}
	&P((1,\Delta_2+1)|(\Delta_1,\Delta_2),a=0)=1-P_1^O,\\
	&P((\Delta_1+1,\Delta_2+1)|(\Delta_1,\Delta_2),a=0)=P_1^O,\\
	&P((\Delta_1+1,1)|(\Delta_1,\Delta_2),a=N)=1-P_2^O,\\
	&P((\Delta_1+1,\Delta_2+1)|(\Delta_1,\Delta_2),a=N)=P_2^O,\\	
	\end{aligned}
	\end{equation} and for $i\neq 0,N$
	\begin{equation}
	\begin{aligned}
	\label{e3*}
	&P((1,\Delta_2+1)|(\Delta_1,\Delta_2),a=i)=(1-P_1^N(a))P_2^N(a),\\
	&P((\Delta_1+1,1)|(\Delta_1,\Delta_2),a=i)=(1-P_2^N(a))P_1^N(a),\\	
	&P((1,1)|(\Delta_1,\Delta_2),a=i)=(1-P_1^N(a))(1-P_2^N(a)),\\
	&P((\Delta_1+1,\Delta_2+1)|(\Delta_1,\Delta_2),a=i)=P_1^N(a)P_2^N(a),
	\end{aligned}
	\end{equation}where $P_1^N(a)$ and $P_2^N(a)$ are the outage probability of client $1$ and client $2$, respectively, using NOMA with $\alpha_1=1-\frac{a}{N}$ and $\alpha_2=\frac{a}{N}$. The time superscript from the state $(\Delta_{1t},\Delta_{2t})$ and action $a_t$ is omitted for brevity.
	\item  $r: \mathcal{S} \times \mathcal{A}  \rightarrow \mathrm{R}$ is the one-stage reward function of state-action pairs, defined by $r(s_t,a)=w_1\Delta_{1t}+w_2\Delta_{2t}$.
 \end{itemize}
Given any initial state $s$, the infinite-horizon average reward of any feasible policy $\pi \in \Pi$ can be expressed as 
\begin{equation}
\label{e4}
C(\pi,s)=\lim_{T \rightarrow \infty}\sup\frac{1}{T} \sum_{k=0}^{T}{\mathbb E}_{s}^\pi[r(s_k,a_k)].
\end{equation} The Problem \ref{p1} can be transformed to the following MDP problem
\begin{prob}
	\label{p2}
	\begin{equation}
	\label{pro2}
	\min_{\pi} C(\pi,s).
	\end{equation}
\end{prob} 
To proceed, we now investigate the existence of optimal stationary and deterministic policy of Problem \ref{p2} and achieve the following theorem.
\begin{theorem}
	\label{TE}
	There exist a constant $J^{*}$, a bounded function $h(\Delta_1,\Delta_2):\mathcal{S} \rightarrow \mathcal{R}$ and a stationary and deterministic policy $\pi^{*}$, satisfies the average reward optimality equation,
	\begin{equation}
	\label{e10}
	J^{*}+h(\Delta_1,\Delta_2)=\min_{a\in \mathcal{A}} (w_1\Delta_1+w_2\Delta_2 \mathbb  +{\mathbb{E}}[h(\hat{\Delta}_1,\hat{\Delta}_2)]),
	\end{equation} $\forall (\Delta_1,\Delta_2) \in \mathcal{S}$, where $\pi^{*}$ is the optimal policy, $J^{*}$ is the optimal average reward, and $(\hat{\Delta}_1,\hat{\Delta}_2)$ is the next state after $(\Delta_1,\Delta_2)$ taking action $a$.  
\end{theorem}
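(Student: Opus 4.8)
The plan is to establish the average cost optimality equation \eqref{e10} via the \emph{vanishing discount approach}, which reduces the unbounded-cost average-reward problem on the countable state space $\mathcal{S}=\mathcal{N}^2$ to a limit of discounted problems. First I would introduce, for each discount factor $\beta\in(0,1)$, the discounted value function $V_\beta(\Delta_1,\Delta_2)$, which---since the action set $\mathcal{A}$ is finite and the one-stage cost is nonnegative---is well defined and satisfies the discounted Bellman equation
\begin{equation}
V_\beta(\Delta_1,\Delta_2)=\min_{a\in\mathcal{A}}\bigl(w_1\Delta_1+w_2\Delta_2+\beta\,\mathbb{E}[V_\beta(\hat{\Delta}_1,\hat{\Delta}_2)]\bigr),
\end{equation}
attained by a stationary deterministic $\beta$-optimal minimizer. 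Fixing the reference state $s_0=(1,1)$, I would define the relative value function $h_\beta(\Delta_1,\Delta_2)=V_\beta(\Delta_1,\Delta_2)-V_\beta(1,1)$ and verify the standard sufficient conditions (Sennott-type) under which $\{h_\beta\}$ admits a pointwise limit $h$ as $\beta\to1$ while $(1-\beta)V_\beta(1,1)\to J^{*}$, so that the pair $(J^{*},h)$ solves \eqref{e10} with the limiting minimizer being stationary, deterministic, and average-optimal.

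The three conditions I would check are: (i) $V_\beta(s)<\infty$ for every $s$ and $\beta$; (ii) a uniform lower bound $h_\beta(s)\ge -L$; and (iii) a uniform upper bound $h_\beta(s)\le M(s)$ for a function $M$ whose one-step expectation is finite under some admissible action at each state. Condition (i) follows by exhibiting a stationary policy with finite discounted (indeed finite average) cost---for instance, any policy that serves a client with success probability bounded away from $0$, so that each AoI has geometric-tailed recurrence times and hence finite expected accumulated cost. Condition (ii) is immediate with $L=0$: since the per-stage cost $w_1\Delta_1+w_2\Delta_2$ is componentwise nondecreasing in the AoI and minimized at $s_0=(1,1)$, a coupling argument yields $V_\beta(\Delta_1,\Delta_2)\ge V_\beta(1,1)$, hence $h_\beta\ge 0$.

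The crux---and the step I expect to be the main obstacle---is condition (iii), the uniform upper bound on $h_\beta$. The idea is to compare the trajectory started at $(\Delta_1,\Delta_2)$ with one started at $(1,1)$ by constructing an auxiliary (suboptimal) policy that drives the system back to the reference state and bounding the accumulated extra cost. Concretely, because the OMA outage probabilities $P_1^{O}$ and $P_2^{O}$ are strictly less than $1$, the BS can, from any state, reset each client's AoI to $1$ within a random time whose tail is geometric; summing the incurred weighted AoI over this reset phase produces an explicit $M(\Delta_1,\Delta_2)$ that grows only polynomially in $\Delta_1,\Delta_2$ and has finite one-step expectation under the OMA actions. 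The delicate point is that this bound must be \emph{uniform in} $\beta$: the hitting-time estimate and the resulting cost must not depend on the discount, and since discounting only reduces cost ($\beta<1$), the estimate obtained at $\beta=1$ dominates and suffices. Once (i)--(iii) hold, the standard selection theorem for average-cost MDPs with unbounded costs guarantees that limit points of $h_\beta$ exist, are bounded, satisfy \eqref{e10}, and are attained by a stationary deterministic policy $\pi^{*}$, which completes the proof.
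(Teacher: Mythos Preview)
Your argument is correct but proceeds along a different route from the paper. You work with Sennott's vanishing-discount conditions directly: finiteness of $V_\beta$, the lower bound $h_\beta\ge 0$ via a componentwise monotonicity/coupling argument, and the upper bound $h_\beta(s)\le M(s)$ via an explicit hitting-time construction in which the BS uses OMA to reset each client's AoI in a geometrically distributed number of slots. The paper instead invokes the weighted-norm (Lyapunov drift) framework of Guo and Hern\'{a}ndez-Lerma~\cite{guo2006average}: it takes the weight function $\omega(s)=w_1\Delta_1+w_2\Delta_2$, verifies the one-step drift condition $\sum_{\hat s}\omega(\hat s)P(\hat s\mid s,a)\le \beta\omega(s)+m$ for some $\beta<1$, and then bounds $h_\alpha$ in the $\omega$-weighted norm through Proposition~6.10.1 of Puterman~\cite{puterman2014markov}. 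Your approach is more self-contained and constructive (you actually build the policy that furnishes $M(s)$), whereas the paper's is shorter because it delegates the analytic work to the off-the-shelf drift criterion; both are standard and lead to the same conclusion. One small caveat: in Sennott's setting the limit $h$ satisfies $0\le h(s)\le M(s)$ but is not uniformly bounded over $\mathcal{S}$, so your phrase ``are bounded'' should be read pointwise (or $\omega$-bounded), which matches the paper's own slightly loose wording in the theorem statement.
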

\begin{proof}
	See Appendix \ref{A1}.
\end{proof}	
\subsection{Action Elimination}
In this subsection, we establish action elimination by analyzing the property of the formulated MDP problem, which can reduce action space for each state for lower computation complexity. According to \eqref{outage2} and \eqref{outage1}, and the fact $\alpha_1+\alpha_2=1$, the outage probability of client $2$ using NOMA $P_2^N$ is decreasing in $\alpha_2$, i.e., $P_2^N(a)$ is decreasing in action $a$, when $ \max\{\lceil{\frac{N}{2}\rceil}+1,\lceil\frac{(2^R-1)N}{2^R}\rceil\}<a<N$. However, the outage probability of client $1$ using NOMA $P_1^N$ is decreasing when $\frac{2^R-1}{2^R}<\alpha_2<\frac{2^R}{2^R+1}$ and increasing when $\frac{2^R}{2^R+1}<\alpha_2<1$. That is, $P_1^N(a)$ is decreasing in $a$ when $a\in \{\max\{\lceil{\frac{N}{2}\rceil}+1,\lceil\frac{(2^R-1)N}{2^R}\rceil\},...,\lfloor\frac{2^RN}{2^R+1}\rfloor\}$ and increasing when $a\in\{\lceil\frac{2^RN}{2^R+1}\rceil,\lceil\frac{2^RN}{2^R+1}\rceil+1,..., N-1\}$. As the BS aims to minimize the weighted sum of the expected AoI of the network, action  $a=\lfloor\frac{2^RN}{2^R+1}\rfloor$  has a better performance in reducing AoI of both clients, with lower outage probability comparing to $a\in \{\max\{\lceil{\frac{N}{2}\rceil}+1,\lceil\frac{(2^R-1)N}{2^R}\rceil\},\max\{\lceil{\frac{N}{2}\rceil}+1,\lceil\frac{(2^R-1)N}{2^R}\rceil\}+1,...,\lfloor\frac{2^RN}{2^R+1}\rfloor\}$. Thus, the action space can be reduced to $a\in\{0,\lfloor\frac{2^RN}{2^R+1}\rfloor,\lfloor\frac{2^RN}{2^R+1}\rfloor+1,..., N\}$.
\subsection{Structural Results on Optimal Policy}
In this subsection, we derive two structural results of the optimal policy which offers an effective way to reduce the offline computation complexity and online implementation hardware requirement.
\begin{theorem}
	\label{T2}
	The optimal policy $\pi^*$ has a switching-type policy. That is, denoting $c$ and $d$ as any action from action space $\{0,\lfloor\frac{2^RN}{2^R+1}\rfloor,\lfloor\frac{2^RN}{2^R+1}\rfloor+1,..., N\}$, 
	\begin{itemize}
		\item If $\pi^*((\Delta_1,\Delta_2))=c$, then $\pi^*((\Delta_1,\Delta_2+z))=d$, where $z$ is any positive integer and $d\geq c$,
		\item If $\pi^*((\Delta_1,\Delta_2))=a$, then $\pi^*((\Delta_1+z,\Delta_2))=d$, where $z$ is any positive integer and $d\leq c$.
	\end{itemize}
\end{theorem}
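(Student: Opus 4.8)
The plan is to recast the claim in terms of the state-action value function associated with the optimality equation \eqref{e10} of Theorem \ref{TE}. Write $Q(\Delta_1,\Delta_2,a)\triangleq w_1\Delta_1+w_2\Delta_2+\mathbb{E}_a[h(\hat\Delta_1,\hat\Delta_2)]$, so that the optimal action at each state is a minimizer of $Q(\Delta_1,\Delta_2,\cdot)$ over the reduced action set $\{0,\lfloor\frac{2^RN}{2^R+1}\rfloor,\dots,N\}$. By Topkis' monotone-selection theorem, a minimizing selection is non-decreasing in a state coordinate whenever $Q$ is submodular in that coordinate and $a$, and non-increasing whenever $Q$ is supermodular. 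Thus both bullets follow once I prove that $Q$ is submodular in $(\Delta_2,a)$ and supermodular in $(\Delta_1,a)$; a short pairwise-optimality (interchange) argument, iterated in $z$, then gives the stated inequalities for arbitrary positive integers $z$.

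The second step reduces both modularity properties to a single structural property of $h$. Using the transition probabilities \eqref{e3}--\eqref{e3*}, the dependence of $\mathbb{E}_a[h]$ on $a$ enters only through the failure probabilities $q_1=P_1^N(a)$ and $q_2=P_2^N(a)$, with the boundary conventions $q_2=1,\,q_1=P_1^O$ at $a=0$ and $q_1=1,\,q_2=P_2^O$ at $a=N$. Computing discrete cross-differences, the $a$-dependent part of the $\Delta_2$-increment of $Q$ is $q_2\big[(1-q_1)D_2h(1,\Delta_2{+}1)+q_1D_2h(\Delta_1{+}1,\Delta_2{+}1)\big]$, where $D_2h(x,y)\triangleq h(x,y{+}1)-h(x,y)$, and the $\Delta_1$-increment is symmetric in the roles of the two clients. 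I then claim that if $h$ is non-decreasing in each coordinate and \emph{submodular}, i.e. $D_2h(\Delta_1,\Delta_2)$ is non-increasing in $\Delta_1$, then together with the monotonicity of the outage probabilities established in the action-elimination subsection ($q_1$ non-decreasing and $q_2$ non-increasing across the reduced action set) each bracket is a convex combination that shifts weight onto its smaller term as $a$ grows; this makes the $\Delta_2$-increment non-increasing in $a$ and the $\Delta_1$-increment non-decreasing in $a$, which is exactly submodularity in $(\Delta_2,a)$ and supermodularity in $(\Delta_1,a)$. The boundary OMA actions are absorbed because $q_1(0)=P_1^O$ and $q_2(N)=P_2^O$ remain consistent with the interior monotonicity of $q_1,q_2$.

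The remaining step, and the heart of the argument, is to prove that $h$ is itself non-decreasing and submodular. I would establish this by induction along the relative value iteration $V_{n+1}(\Delta_1,\Delta_2)=\min_a Q_n(\Delta_1,\Delta_2,a)$, with $Q_n$ defined as above but with $h$ replaced by $V_n$, and then pass to the limit $V_n\to h$. Monotonicity is routine: the stage cost $w_1\Delta_1+w_2\Delta_2$ is increasing, the four reachable states are coordinate-wise monotone shifts of $(\Delta_1,\Delta_2)$, so each $Q_n(\cdot,\cdot,a)$ is non-decreasing and the pointwise minimum preserves monotonicity. For submodularity, each per-action $Q_n(\cdot,\cdot,a)$ is submodular in $(\Delta_1,\Delta_2)$, since the stage cost is modular and the reachable-state terms $V_n(1,\Delta_2{+}1)$, $V_n(\Delta_1{+}1,1)$, $V_n(\Delta_1{+}1,\Delta_2{+}1)$ are submodular whenever $V_n$ is (the first two depend on a single coordinate, the last is a diagonal shift), combined with non-negative $a$-dependent weights.

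The main obstacle is that the pointwise minimum of submodular functions need not be submodular, and the difficulty is sharpened here because the optimal action moves in opposite directions along the two axes (upward in $\Delta_2$, downward in $\Delta_1$): no single order on $a$ renders $Q_n$ jointly submodular on the three-dimensional lattice, so the standard Topkis preservation lemma cannot be invoked verbatim to carry submodularity through the minimization. I expect to close this with a direct interchange argument, choosing minimizers $a^{++}$ and $a^{--}$ at the two diagonal states $(\Delta_1{+}1,\Delta_2{+}1)$ and $(\Delta_1,\Delta_2)$ and upper-bounding $V_{n+1}$ at the two anti-diagonal states by evaluating $Q_n$ there at cross-assigned actions, so that per-action submodularity in $(\Delta_1,\Delta_2)$ together with the monotonicity of $q_1,q_2$ in $a$ forces the inequality. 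Checking that this cross-assignment stays valid when one or both of $a^{++},a^{--}$ is a boundary OMA action, whose transitions have two outcomes rather than four, is the case-by-case bookkeeping I would have to carry out carefully.
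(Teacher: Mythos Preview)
Your approach is genuinely different from the paper's, and the difficulty you flag is real in your framework but is precisely what the paper's route avoids.

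The paper does not attempt to prove that $h$ is submodular on the two-dimensional lattice. Instead it invokes \cite[Theorem~8.11.3]{puterman2014markov} twice, once with states totally ordered by $\Delta_2$ (with $\Delta_1$ fixed) and once by $\Delta_1$. For each ordering it verifies four conditions on the MDP \emph{primitives}: (a) $r(s,a)$ nondecreasing in $s$; (b) the tail-CDF $q(k\mid s,a)=\sum_{j\ge k}p(j\mid s,a)$ nondecreasing in $s$; (c) $r(s,a)$ subadditive in $(s,a)$; and (d) $q(k\mid s,a)$ subadditive in $(s,a)$. Because under the $\Delta_2$-ordering the next $\Delta_2$-coordinate is either $1$ or $\Delta_2{+}1$ with probabilities governed solely by $P_2^N(a)$ or $P_2^O$, the tail $q(k\mid s,a)$ is piecewise constant in $k$ and the subadditivity check is a short case analysis over action pairs $(a^-,a^+)$. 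The crucial point is that Puterman's theorem only needs \emph{monotonicity} of the value function to propagate through value iteration; subadditivity of $Q_n$ in $(s,a)$ then follows from (c), (d) via an Abel-summation identity $\mathbb{E}_a[V_n(\hat s)]=\sum_k q(k\mid s,a)\,[V_n(k)-V_n(k{-}1)]$, which expresses the expectation as a non-negative combination of subadditive functions. No two-variable submodularity of $V_n$ is ever required, so the ``min of submodular is not submodular'' obstruction never arises.

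Your route, by contrast, commits to establishing submodularity of $h$ in $(\Delta_1,\Delta_2)$ and then deducing the cross-modularity of $Q$ in $(\Delta_i,a)$. The reduction you wrote is correct, but the inductive step through the $\min_a$ is exactly where the argument is incomplete: you correctly observe that no single lattice order on $(\Delta_1,\Delta_2,a)$ makes $Q_n$ jointly submodular, so Topkis' preservation lemma does not apply, and the cross-assignment interchange you sketch would need to be carried out in full (including the OMA boundary cases) before the proof is closed. This is not hopeless, but it is substantially more work than the paper's verification of conditions (a)--(d), and it is work the paper's approach simply does not need. If you want to salvage your line, the cleanest fix is to notice that your computation of the $\Delta_2$-increment of $Q$ already depends on $a$ only through $q_2$ and $q_1$, and that its sign structure requires only \emph{monotonicity} of $h$ in each coordinate, which is easy; you can then drop the two-dimensional submodularity hypothesis on $h$ altogether and argue directly, which is in spirit the same reduction the paper achieves via the tail-CDF conditions.
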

\begin{proof}
	See Appendix \ref{A2}.
\end{proof}	

Given the structure of the optimal policy, only the decision switching boundary is needed for implementation, rather than storing each state-action pair in the optimal policy, which significantly reduces the memory for the hardware. In addition, based on the structure, a special algorithm can be developed accordingly as in \cite{wang2018skip} to reduce the complexity of calculating the optimal policy.

\subsection{Suboptimal Policy}
In this subsection, we propose a suboptimal policy, with lower computation complexity comparing with that of optimal MDP policy. Inspired by the Max-weight policy in \cite{kadota2018scheduling}, the proposed suboptimal policy makes use of the transition probability in the MDP and minimizes the expectation of the reward of the next stage. According to \eqref{e3}, given current state $s=(\Delta_1,\Delta_2)$, the expected reward of next stage $\hat{s}$ can be expressed as 
\begin{small}
\begin{equation}
\mathbb{E}[r(\hat{s},a)]=\left\{
\begin{array}{rcl}
1+w_1P_1^O\Delta_1+w_2\Delta_2,& &\text{if }  a=1;\\
1+w_1\Delta_1+w_2P_2^O\Delta_2,& &\text{if } a=N\\
1+w_1P_1^N(a)\Delta_1+w_2P_2^N(a)\Delta_2,& &\text{otherwise.}  \\
\end{array}
\right.
\end{equation}
\end{small}Then, the action of state $s$ in suboptimal policy $\bar{\pi}$ can be given by
\begin{equation}
\bar{\pi}(s)=\arg \min_a \mathbb{E}[r(\hat{s},a)].
\end{equation} The suboptimal policy is simple and easy to implement. Moreover,  as we show via the numerical results in Section IV, the suboptimal policy can achieve near-optimal performance.  

By applying the suboptimal policy, the considered MDP problem is transferred into a Markov chain. We then can approximately calculate the weighted sum of the expected AoI of the network by finite state approximation as following,
\begin{equation}
P_{s,\hat{s}}^{\bar{\pi}}=\left\{
\begin{array}{rcl}
P_{s,\hat{s}}^{\bar{\pi}}, & &\text{if }  \hat{s}\in \mathcal{S}_m; \\
P_{s,\hat{s}}^{\bar{\pi}}+\sum_{s'\in \mathcal{S}-\mathcal{S}_m}P_{s,s'}^{\bar{\pi}}& & \text{otherwise, } \\
\end{array}
\right.
\end{equation} where $\mathcal{S}_m$ is the set of states with $\Delta_1\leq m$ and $\Delta_2\leq m$ and $P_{s,\hat{s}}^{\bar{\pi}}$ is the probability of transiting from state $s$ to $\hat{s}$ by taking action $\bar{\pi}(s)$. Accordingly, the transition matrix of the corresponding Markov chain $\mathbf{P}^{\bar{\pi}}$ can be constructed. Denoting $\pmb{\theta}$ as the steady state distribution of the Markov chain, by solving $\pmb{\theta}=\mathbf{P}^{\bar{\pi}}\pmb{\theta}$, the approximated weighted sum of the expected AoI of the network can be expressed by
\begin{equation}
\bar{\boldmath \Delta}'(\bar{\pi})=\sum_{s=(\Delta_1,\Delta_2)\in \mathcal{S}_m}\theta_s(w_1\Delta_1+w_2\Delta_2),
\end{equation} where $\theta_s$ is the element in $\pmb{\theta}$, denoting steady state probability of state $s$. This also serves as the lower bound of the weighted sum of the expected AoI of policy $\bar{\pi}$, i.e., $\bar{\boldmath \Delta}(\bar{\pi}) > \bar{\boldmath \Delta}'(\bar{\pi})$.

\section{Numerical Results}
This section provides numerical results to verify the analytical results provided in the preceding sections. We set path loss exponent $\tau=2$ and target data rate $R=1$ in all simulations. The SNR in this section refers to as the transmission SNR $\rho$.

We apply Relative Value Iteration (RVI) on truncated finite states ($\Delta_i \leq 100$, $\forall i$) to approximate the countable infinite state space according to \cite{sennott2009stochastic}. The optimal policy and suboptimal policy is illustrated in Fig.\ref{fig1}, where SNR$=18$dB, $d_1=2$, $d_2=4$ and $w_1=w_2=0.5$. The switching-type policy is verified. Besides, we can find that the proposed suboptimal policy is close to the optimal policy.
\begin{figure}[!t]
	\centering
	\begin{subfigure}[b]{0.22\textwidth}
		\centering
		\includegraphics[width=\textwidth]{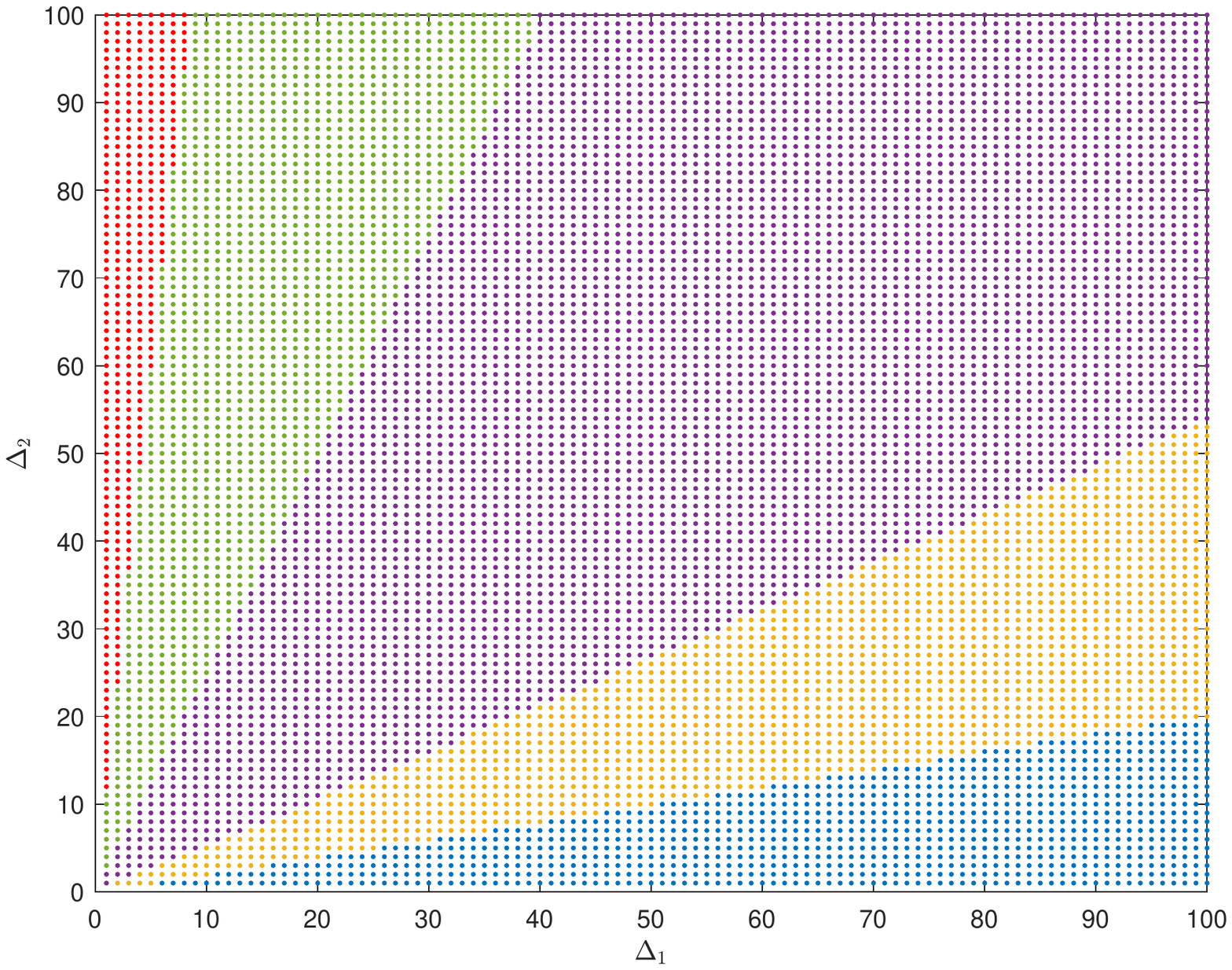}
		\caption{MDP optimal policy}
		\label{fig10}
	\end{subfigure}
	\hfill
	\begin{subfigure}[b]{0.22\textwidth}
		\centering
		\includegraphics[width=\textwidth]{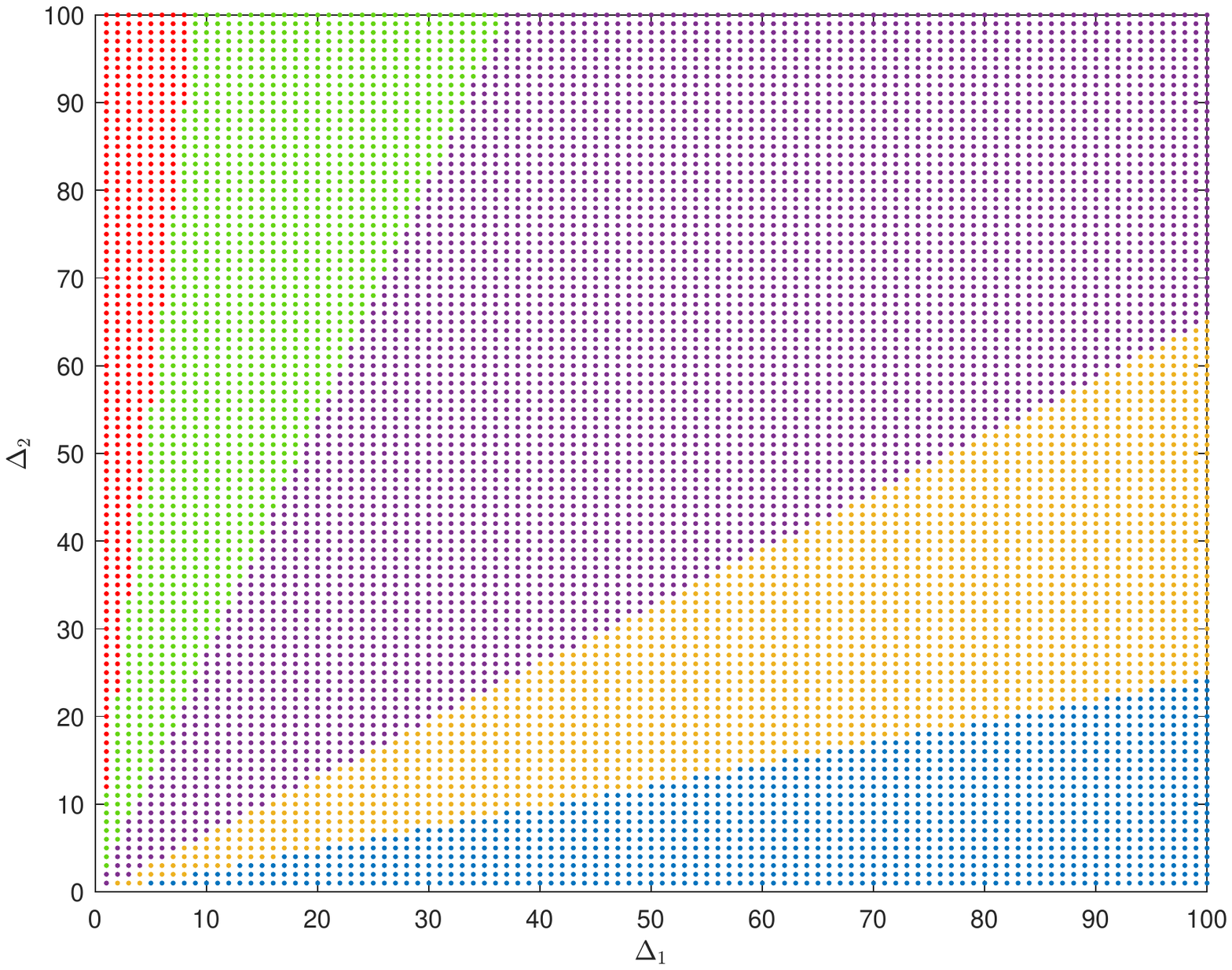}
		\caption{Suboptimal policy}
		\label{fig11}
	\end{subfigure}
	\caption{Age-optimal policy and suboptimal policy. Each point represents a state $s=(\Delta_1,\Delta_2)$. The colored area indicates action for each state, i.e., $a=0$ for states in the blue area; $a=7$ for states in the orange area; $a=8$ for states in the purple area; $a=9$ for states in the green area and $a=10$ for states in the red area, where $N=10$ and $\mathcal{A}=\{0,6,7,8,9,10\}$.}
	\label{fig1}
\end{figure}
\begin{figure}[!t]
	\centerline{\includegraphics[width=0.39\textwidth]{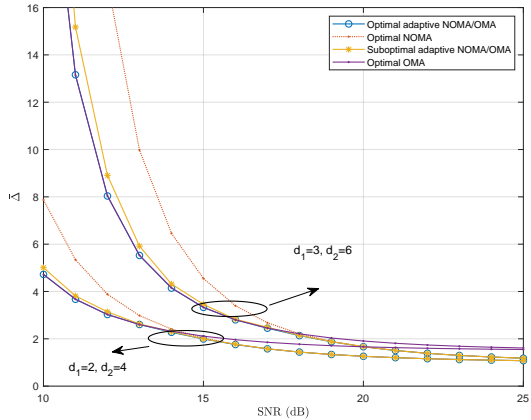}}
	\caption{The performance comparison of different policies versus SNR, when $w_1=w_2=0.5$.}
	\label{fig2}
\end{figure}

Fig. \ref{fig2} illustrates the performance of the weighted sum of the expected AoI of the two clients under optimal policy using adaptive NOMA/OMA scheme (optimal adaptive NOMA/OMA scheme), the policy that always using NOMA for transmission (optimal NOMA policy with $a\in\{\max\{\lceil{\frac{N}{2}\rceil}+1,\lceil\frac{(2^R-1)N}{2^R}\rceil\},...,N-1\}$), the proposed suboptimal policy and the optimal OMA policy that the BS adaptively selects one client to conduct transmission (optimal OMA scheme with $a\in \{0,N\}$) in two cases: 1) $d_1=2$ and $d_2=4$; 2) $d_1=3$ and $d_2=6$. The setting of the rest system parameters is the same as that in Fig \ref{fig1}. We can find that the proposed suboptimal policy achieves near-optimal performance, and its weighted sum of the expected AoI almost coincides with that of the optimal adaptive NOMA/OMA policy especially when the outage probability of two clients are small as shown in Fig. \ref{fig2}. Specifically, the performance of suboptimal policy is closer to that of the optimal adaptive NOMA/OMA policy when $d_1=2$ and $d_2=4$, comparing to the case when $d_1=3$ and $d_2=6$; the gap between the weighted sum of the expected AoI of the suboptimal policy and that of the optimal adaptive NOMA/OMA policy narrows as the SNR increases.

Moreover, we can see that when SNR is small, e.g., SNR$< 15$dB, the performance of optimal adaptive NOMA/OMA scheme and that of optimal OMA scheme are almost the same in Fig. \ref{fig2}. This is due to the low SNR, which leads to a larger outage probability for both OMA and NOMA. The situation for NOMA is even worse. As such, both optimal adaptive NOMA/OMA policy and the suboptimal policy will prefer not to choose NOMA scheme but use OMA scheme. Thus, these two policies have similar performance. As SNR increases, the weighted sum of the expected AoI of optimal OMA policy will approach $1.5$, when $w_1=w_2=0.5$. This is the optimal performance under the OMA scheme, where the outage probability of each client is 0, and thus $1$ and $2$ take turns to form the age evolution of each client. 

Furthermore, we can find that the performance of optimal adaptive NOMA/OMA policy and that of suboptimal policy and NOMA policy are relatively close when SNR is large, e.g., SNR$\geq20$dB in Fig. \ref{fig2}. This is because both optimal adaptive NOMA/OMA policy and suboptimal policy are more likely to choose NOMA for transmission. When SNR is large enough, the optimal performance of both the optimal adaptive NOMA/OMA policy and the suboptimal policy approaches $1$ as the instantaneous AoI of each client will be always $1$, thanks to no outage for both clients in NOMA. The BS thus always chooses NOMA scheme to conduct transmissions to both clients. In addition, NOMA is better than optimal OMA when SNR$>16$dB for $d_1=2$ and $d_2=4$ and SNR$>19$dB for  $d_1=3$ and $d_2=6$. This shows the advantage of NOMA in timely status update when SNR is large.


\section{Conclusions}
In this paper, we considered a wireless network with a base station (BS) conducting timely transmission to two clients in a slotted manner via hybrid non-orthogonal multiple access (NOMA)/orthogonal multiple access (OMA). The BS can adaptively switch between NOMA and OMA for the downlink transmission to minimize the AoI of the network. We develop an optimal policy for the BS to decide whether to use NOMA or OMA for downlink transmission based on the instantaneous AoI of both clients in order to minimize the weighted sum of the expected AoI of the network. This is achieved by formulating a Markov Decision Process (MDP) problem. We proved the existence of optimal stationary and deterministic policy. Action elimination was conducted to reduce the computation complexity. The optimal policy is shown to have a switching-type property with obvious decision boundaries. A suboptimal policy with lower computation complexity was also proposed, which is shown to achieve near-optimal performance according to simulation results. The approximate average AoI performance of the suboptimal policy was also derived. The performance of different policies under different system settings was compared and analyzed in numerical results to provide useful insights for practical system designs. 

\begin{appendices}
	\section{Proof of Theorem \ref{TE}}
	\label{A1}
	We prove this theorem by verifying the Assumptions 3.1, 3.2 and 3.3 in \cite{guo2006average} hold. As the action space for each state is finite, assumption 3.2 holds, and we only need to verify the following two conditions. 
	\begin{itemize}
		\item[\text{1)}] There exist positive constants $\beta < 1$, $M$ and $m$, and a measurable function $\omega(s) \geq1$ on $S$, $s=(\Delta_1,\Delta_2)$ such that the reward function of MDP problem $r(s,a)=w_1\Delta_1+w_2\Delta_2$, $|r(s,a)|\leq M\omega(s)$ for all state-action pairs $(s,a)$ and
		\begin{small}
			\begin{equation}
			\sum_{\hat{s}\in S} \omega(\hat{s})P(\hat{s}|s,a)\leq\beta\omega(s)+m,\ {\rm \ for \ all}\ (s,a).
			\end{equation}
		\end{small} 
		\item [\text{2)}] There exist two value functions $v_1,v_2 \in B_{\omega}(S)$, and some state $s_0\in S$, such that 
		\begin{small}
			\begin{equation}
			v_1(s)\leq h_{\alpha}(s)\leq v_2(s), \ {\rm for \ all} \ s\in S,\ {\rm and} \ \alpha \in(0,1),
			\vspace{-.5em}
			\end{equation} 
		\end{small}where {\small$h_{\alpha}(s)=V_{\alpha}(s)-V_{\alpha}(s_0)$} and {\small$B_{\omega}(S):=\{u:\Vert u\Vert_{\omega} <\infty \}$} denotes Banach space, {\small$\Vert u\Vert_{\omega}:=\sup_{s\in S}\omega(s)^{-1}|u(s)|$} denotes the weighted supremum norm.
	\end{itemize}
For condition 1, we show that when $\omega(s)=w_1\Delta_1+w_2\Delta_2$ and $m >1 $, there exist {\small$\max_a\{\frac{w_1\Delta_1 P_1^O+w_2\Delta_2+1-m}{w_1\Delta_1+w_2\Delta_2},\frac{w_1\Delta_1+w_2P_2^O\Delta_2+1-m}{w_1\Delta_1+w_2\Delta_2},$ $\;\frac{w_1P_1^N(a)\Delta_1+w_2P_2^N(a)\Delta_2+1-m}{w_1\Delta_1+w_2\Delta_2}\}\leq \beta<1$} to meet condition 1. To prove condition 2 in our problem, we show that when $\omega(s)=w_1\Delta_1+w_2\Delta_2$, there exists $\frac{w_1\Delta_1+w_2\Delta_2+1}{w_1\Delta_1+w_2\Delta_2}\leq \kappa<\infty$ that $\sum_{\hat{s}\in S} \omega(\hat{s})P(\hat{s}|s,a)\leq\kappa\omega(s) $ for all $(s,a)$, and for $d \in {\small \Pi^{MD}}$, ${\small \sum_{\hat{s}\in S} \omega(\hat{s})P_d(\hat{s}|s,a)\leq \omega(s)+1\leq (1+1)\omega(s)}$, so that  $\alpha^N\sum_{\hat{s}\in S} \omega(\hat{s})P_d^N(\hat{s}|s,a)\leq \alpha^N (\omega(s)+N)<\alpha^N(1+N)\omega(s)$. Hence, for each $\alpha$, $0\leq \alpha <1$, there exists a $\eta$, $0\leq \eta<1$ and an integer $N$ such that 
	\begin{equation}
	\alpha^N\sum_{\hat{s}\in S} \omega(\hat{s})P_{\pi}^N(\hat{s}|s,a)\leq \eta \omega(s)
	\end{equation} for $\pi=(d_1,...,d_N)$, where $d_k\in D^{MD}$, $1\leq k\leq N$. Then, according to Proposition 6.10.1 \cite{puterman2014markov}, for each $\pi\in \Pi^{MD}$ and $s\in S$
	\begin{small}
		\begin{equation}
		|V_{\alpha}(s)|\leq \frac{1}{1-\eta}[1+\alpha\kappa+...+(\alpha\kappa)^{(N-1)}]w(s).
		\vspace{-.5em}
		\end{equation}
	\end{small} We thus can further prove the condition 2. This completes the proof.

\section{Proof of Theorem \ref{T2}}
\label{A2}

The switching-type policy is actually the same as the monotone nondecreasing policy in $\Delta_2$ when $\Delta_1$ is fixed, and the monotone nonincreasing policy in $\Delta_1$ when $\Delta_2$ is fixed. To prove the monotonicity of the optimal policy of the MDP problem in $\Delta_2$, we verify that the following four conditions given in \cite[Theorem~8.11.3]{puterman2014markov} hold.
\begin{itemize}
	\item [a)] $r(s,a)$ is nondecreasing in $s$ for all $a\in A$;
	\item [b)] $q(k|s,a)=\sum_{j=k}^{\infty}p(j|s,a)$ is nondecreasing in $s$ for all $k\in S$ and $a\in A$;
	\item [c)] $r(s,a)$ is a subadditive function on $S\times A$ and
	\item [d)] $q(k|s,a)$ is a subadditive function on $S\times A$ for all $k\in S$.
\end{itemize}
To verify these conditions, we first order the state by $\Delta_2$, i.e., $s^+\geq s^-$ if $\Delta_2^+\geq \Delta_2^-$ where $s^+=(\cdot,\Delta_2^+)$ and $s^-=(\cdot,\Delta_2^-)$. The one-step reward function of the MDP is 
\begin{equation}
\label{req1}
r(s,a)=w_1\Delta_1+w_2\Delta_2.
\end{equation} It is obvious that the condition a) is satisfied. According to the transition probabilities in \eqref{e3} and \eqref{e3*}, if the current state $s=(\Delta_1,\Delta_2)$, the next possible states are $s_1=(\cdot,\Delta_2+1)$ (including $(1,\Delta_2+1)$ and $(\Delta_1+1,1)$) and $s_2=(\cdot,1)$ (including $(1,1)$ and $(\Delta_1+1,1)$).  Based on \eqref{e3} and \eqref{e3*}, we have
\begin{equation}
\label{req2}
q(k|s,a=0)=\left\{
\begin{array}{rcl}
0,& &\text{if }  k > s_1 \\
1,& &\text{otherwise.}  \\
\end{array}
\right.
\end{equation}
\begin{equation}
\label{req4}
q(k|s,a=i, 0<i<N)=\left\{
\begin{array}{rcl}
0,& &\text{if }  k > s_1\\
P_2^N(i),& &\text{if }  s_1\geq k>s_2 \\
1,& & \text{if } k\leq s_2
\end{array}
\right.
\end{equation}
\begin{equation}
\label{req3}
q(k|s,a=N)=\left\{
\begin{array}{rcl}
0,& &\text{if }  k > s_1\\
P_2^O,& &\text{if }  s_1\geq k>s_2 \\
1,& & \text{if } k\leq s_2
\end{array}
\right.
\end{equation} Thus, condition b) is immediate. 

To verify the remaining two conditions, we give the definition of subadditive in the following
\begin{definition}
	\label{d0}
	(Subadditive\cite{puterman2014markov}) A multivariable function $Q(\delta,a): \mathcal{N} \times \mathcal{A} \rightarrow R$ is subadditive in $(\delta,a)$ , if for all $\delta^{+}\geq\delta^{-}$ and $a^{+}\geq a^{-}$,
	\begin{equation}
	\label{e8}
	\begin{aligned}
	Q(\delta^{+},a^{+})+ Q(\delta^{-},a^{-}) \leq Q(\delta^{+},a^{-})+ Q(\delta^{-},a^{+})
	\end{aligned}
	\end{equation}holds.
\end{definition} 
According to \eqref{req1}, condition c) follows. For the last condition, we verify whether
\begin{equation}
q(k|s^+,a^+)+q(k|s^-,a^-)\leq q(k|s^+,a^-)+q(k|s^-,a^+),
\end{equation} with $s^+=(\Delta_1,\Delta_2^+)$ and $s^-=(\Delta_1,\Delta_2^-)$ where $\Delta_2^+\geq \Delta_2^-$ and $a^+\geq a^-$. As there are three actions, we consider three cases: (1) $a^+=i$, $a^-=0$, (2) $a^+=N$, $a^-=0$ and (3) $a^+=N$, $a^-=i$ and (4) $a^+=i$, $a^-=j$ for  $ 0\leq j\leq i \leq N$, $\forall i,j$.
According to \eqref{req2}-\eqref{req3}, we can verify that condition d) holds. As all these four conditions hold, the optimal policy is monotone nondecreasing in $\Delta_2$, when $\Delta_1$ is fixed. The proof of monotonicity of the optimal policy of the MDP problem in $\Delta_1$ is similar, thus omitted for brevity. This completes the proof.
\end{appendices}

\bibliography{ref}

\begin{thebibliography}{10}
\providecommand{\url}[1]{#1}
\csname url@samestyle\endcsname
\providecommand{\newblock}{\relax}
\providecommand{\bibinfo}[2]{#2}
\providecommand{\BIBentrySTDinterwordspacing}{\spaceskip=0pt\relax}
\providecommand{\BIBentryALTinterwordstretchfactor}{4}
\providecommand{\BIBentryALTinterwordspacing}{\spaceskip=\fontdimen2\font plus
\BIBentryALTinterwordstretchfactor\fontdimen3\font minus
  \fontdimen4\font\relax}
\providecommand{\BIBforeignlanguage}[2]{{%
\expandafter\ifx\csname l@#1\endcsname\relax
\typeout{** WARNING: IEEEtran.bst: No hyphenation pattern has been}%
\typeout{** loaded for the language `#1'. Using the pattern for}%
\typeout{** the default language instead.}%
\else
\language=\csname l@#1\endcsname
\fi
#2}}
\providecommand{\BIBdecl}{\relax}
\BIBdecl

\bibitem{kaul2012real}
S.~Kaul, R.~Yates, and M.~Gruteser, ``Real-time status: How often should one
  update?'' in \emph{2012 Proceedings IEEE INFOCOM}.\hskip 1em plus 0.5em minus
  0.4em\relax IEEE, 2012, pp. 2731--2735.

\bibitem{wang2019minimizing2}
Q.~Wang, H.~Chen, Y.~Gu, Y.~Li, and B.~Vucetic, ``Minimizing the age of
  information of cognitive radio-based iot systems under a collision
  constraint,'' \emph{arXiv preprint arXiv:2001.02482}, 2020.

\bibitem{ceran2019average}
E.~T. Ceran, D.~G{\"u}nd{\"u}z, and A.~Gy{\"o}rgy, ``Average age of information
  with hybrid arq under a resource constraint,'' \emph{IEEE Transactions on
  Wireless Communications}, vol.~18, no.~3, pp. 1900--1913, 2019.

\bibitem{wang2018skip}
B.~Wang, S.~Feng, and J.~Yang, ``To skip or to switch? minimizing age of
  information under link capacity constraint,'' in \emph{2018 IEEE 19th
  International Workshop on Signal Processing Advances in Wireless
  Communications (SPAWC)}.\hskip 1em plus 0.5em minus 0.4em\relax IEEE, 2018,
  pp. 1--5.

\bibitem{wang2019minimizing}
Q.~Wang, H.~Chen, Y.~Li, Z.~Pang, and B.~Vucetic, ``Minimizing age of
  information for real-time monitoring in resource-constrained industrial iot
  networks,'' \emph{arXiv preprint arXiv:1912.07186}, 2019.

\bibitem{kaul2012status}
S.~K. Kaul, R.~D. Yates, and M.~Gruteser, ``Status updates through queues,'' in
  \emph{2012 46th Annual Conference on Information Sciences and Systems
  (CISS)}.\hskip 1em plus 0.5em minus 0.4em\relax IEEE, 2012, pp. 1--6.

\bibitem{gu2019timely}
Y.~Gu, H.~Chen, Y.~Zhou, Y.~Li, and B.~Vucetic, ``Timely status update in
  internet of things monitoring systems: An age-energy tradeoff,'' \emph{IEEE
  Internet of Things Journal}, 2019.

\bibitem{gu2019minimizing}
Y.~Gu, H.~Chen, C.~Zhai, Y.~Li, and B.~Vucetic, ``Minimizing age of information
  in cognitive radio-based iot systems: Underlay or overlay?'' \emph{IEEE
  Internet of Things Journal}, 2019.

\bibitem{wang2019age}
R.~Wang, Y.~Gu, H.~Chen, Y.~Li, and B.~Vucetic, ``On the age of information of
  short-packet communications with packet management,'' \emph{arXiv preprint
  arXiv:1908.05447}, 2019.

\bibitem{kadota2018optimizing}
I.~Kadota, A.~Sinha, and E.~Modiano, ``Optimizing age of information in
  wireless networks with throughput constraints,'' in \emph{IEEE INFOCOM
  2018-IEEE Conference on Computer Communications}.\hskip 1em plus 0.5em minus
  0.4em\relax IEEE, 2018, pp. 1844--1852.

\bibitem{kadota2018scheduling}
I.~Kadota, A.~Sinha, E.~Uysal-Biyikoglu, R.~Singh, and E.~Modiano, ``Scheduling
  policies for minimizing age of information in broadcast wireless networks,''
  \emph{IEEE/ACM Transactions on Networking (TON)}, vol.~26, no.~6, pp.
  2637--2650, 2018.

\bibitem{kadota2019minimizing}
I.~Kadota and E.~Modiano, ``Minimizing the age of information in wireless
  networks with stochastic arrivals,'' \emph{arXiv preprint arXiv:1905.07020},
  2019.

\bibitem{yates2017status}
R.~D. Yates and S.~K. Kaul, ``Status updates over unreliable multiaccess
  channels,'' in \emph{2017 IEEE International Symposium on Information Theory
  (ISIT)}.\hskip 1em plus 0.5em minus 0.4em\relax IEEE, 2017, pp. 331--335.

\bibitem{jiang2018can}
Z.~Jiang, B.~Krishnamachari, S.~Zhou, and Z.~Niu, ``Can decentralized status
  update achieve universally near-optimal age-of-information in wireless
  multiaccess channels?'' in \emph{2018 30th International Teletraffic Congress
  (ITC 30)}, vol.~1.\hskip 1em plus 0.5em minus 0.4em\relax IEEE, 2018, pp.
  144--152.

\bibitem{maatouk2019minimizing1}
A.~Maatouk, M.~Assaad, and A.~Ephremides, ``Minimizing the age of information
  in a csma environment,'' \emph{arXiv preprint arXiv:1901.00481}, 2019.

\bibitem{maatouk2019minimizing}
------, ``Minimizing the age of information: Noma or oma?'' \emph{arXiv
  preprint arXiv:1901.03020}, 2019.

\bibitem{ding2017application}
Z.~Ding, Y.~Liu, J.~Choi, Q.~Sun, M.~Elkashlan, I.~Chih-Lin, and H.~V. Poor,
  ``Application of non-orthogonal multiple access in lte and 5g networks,''
  \emph{IEEE Communications Magazine}, vol.~55, no.~2, pp. 185--191, 2017.

\bibitem{saito2013non}
Y.~Saito, Y.~Kishiyama, A.~Benjebbour, T.~Nakamura, A.~Li, and K.~Higuchi,
  ``Non-orthogonal multiple access (noma) for cellular future radio access,''
  in \emph{2013 IEEE 77th vehicular technology conference (VTC Spring)}.\hskip
  1em plus 0.5em minus 0.4em\relax IEEE, 2013, pp. 1--5.

\bibitem{cui2016novel}
J.~Cui, Z.~Ding, and P.~Fan, ``A novel power allocation scheme under outage
  constraints in noma systems,'' \emph{IEEE Signal Processing Letters},
  vol.~23, no.~9, pp. 1226--1230, 2016.

\bibitem{yu2017performance}
Y.~Yu, H.~Chen, Y.~Li, Z.~Ding, and B.~Vucetic, ``On the performance of
  non-orthogonal multiple access in short-packet communications,'' \emph{IEEE
  Communications Letters}, vol.~22, no.~3, pp. 590--593, 2017.

\bibitem{sennott2009stochastic}
L.~I. Sennott, \emph{Stochastic dynamic programming and the control of queueing
  systems}.\hskip 1em plus 0.5em minus 0.4em\relax John Wiley \& Sons, 2009,
  vol. 504.

\bibitem{guo2006average}
X.~Guo and Q.~Zhu, ``Average optimality for markov decision processes in borel
  spaces: a new condition and approach,'' \emph{Journal of Applied
  Probability}, vol.~43, no.~2, pp. 318--334, 2006.

\bibitem{puterman2014markov}
M.~L. Puterman, \emph{Markov Decision Processes.: Discrete Stochastic Dynamic
  Programming}.\hskip 1em plus 0.5em minus 0.4em\relax John Wiley \& Sons,
  2014.

\end{thebibliography}
\bibliographystyle{IEEEtran}

\end{document}